\title{Counting Dominating Sets in Directed Path Graphs}
\author{
        \textsc{Min-Sheng Lin}\thanks{Email: mslin@ntut.edu.tw}\\ %
        Department of Electrical Engineering\\
        National Taipei University of Technology\\
        Taipei 106, Taiwan, ROC\\
}
\titleformat*{\section}{\large\bfseries}
\titleformat*{\subsection}{\large\bfseries}
\titleformat*{\subsubsection}{\itshape\subsubsectionfont}
\newcommand{\keywords}{\textbf{ \textit{Keywords --- }}}
\newtheorem{thm}{Theorem}
\newtheorem{lem}{Lemma}
\newtheorem{cor}{Corollary}
\newtheorem{pty}{Property}
\begin{document}
\maketitle

\begin{abstract}
A dominating set of a graph is a set of vertices such that every vertex not in the set has at least one neighbor in the set. The problem of counting dominating sets is \#P-complete for chordal graphs but solvable in polynomial time for its subclass of interval graphs. The complexity status of the corresponding problem is still undetermined for directed path graphs, which are a well-known class of graphs that falls between chordal graphs and interval graphs. This paper reveals that the problem of counting dominating sets remains \#P-complete for directed path graphs but a stricter constraint to rooted directed path graphs admits a polynomial-time solution.
\end{abstract}
\keywords{Algorithms; Dominating sets; Counting problem; Directed path graphs.}

\section{Introduction}
For a graph $G$, a subset $S$ of vertices of $G$ is a \textit{dominating set} (DS) if every vertex of $G$ not in $S$ is adjacent to a vertex in $S$. This paper concerns the problem of computing the number of DSs in a graph. The problem of counting dominating sets (abbr. \#DS problem) is known to be \#P-complete for planar graphs  \cite{hunt1998complexity}, chordal graphs, chordal bipartite graphs \cite{kijima2011dominating}, and tree convex bipartite graphs \cite{lin2022counting}. Valiant \cite{valiant1979complexity} defined the class of \#P problems as those that involve counting access computations for problems in NP, and the class of \#P-complete problems includes the hardest problems in \#P. As is well known, all algorithms that exactly solve these problems have exponential time complexity, so efficient algorithms for solving this class of problems are unlikely to be developed. However, some polynomial-time algorithms for solving problem \#DS in interval graphs, trapezoid graphs \cite{kijima2011dominating}, and rooted path-tree bipartite graphs \cite{lin2022counting} have been found.

One very important class of graphs is the class of intersection graphs. Let $F$ be a finite family of non-empty sets. A graph $G$ is an intersection graph for $F$ if an isomorphism exists between the vertices of $G$ and the sets of $F$ such that two vertices are adjacent if and only if their corresponding sets in $F$ have a non-empty intersection. Examples of such graph classes are chordal graphs and interval graphs.

\textit{Chordal graphs} are graphs in which every cycle with a length of at least four has a chord. Gavril \cite{gavril1974intersection} proved that chordal graphs are the intersection graphs of a family of subtrees in a clique tree. A tree $T$ is a \textit{clique tree} for a graph $G$ if each node in $T$ corresponds to a maximal clique in $G$. To avoid confusion with the vertices of graph $G$, the vertices of tree $T$ are called nodes or clique nodes. Let $T_v$ denote the set of all cliques of $G$ that contain vertex $v$. $G$ is a chordal graph if and only if $T_v$ is a subtree in a clique tree $T$ for every vertex $v$ of $G$. The concept of chordal graphs suggests definitions of some subclasses of chordal graphs. \textit{Undirected path graphs} are the intersection graphs of a family of undirected subpaths in a clique tree. \textit{Directed path graphs} are the intersection graphs of a family of directed subpaths in a directed clique tree. \textit{Rooted directed path graphs} are the intersection graphs of a family of directed subpaths in a rooted directed clique tree. A tree is called a \textit{rooted directed tree} if one node has been designated as the root, and the edges have a natural orientation away from the root. \textit{Interval graphs} are rooted directed path graphs in which the clique tree is itself a path. These graph classes are related by the following proper inclusions; interval graphs $\subset$ rooted directed path graphs $\subset$ directed path graphs $\subset$ undirected path graphs $\subset$ chordal graphs \cite{monma1986intersection}.

The \#DS problem remains \#P-complete even for chordal graphs, but a polynomial-time algorithm exists for solving the problem for interval graphs \cite{kijima2011dominating}. It is known that one can count minimal dominating sets in strongly chordal graphs in polynomial time \cite{bergougnoux2019counting} and, thus, for rooted directed path graphs which form a proper subclass of strongly chordal graphs \cite{brandstadt2010rooted}. However, there is still no direct approach to count all dominating sets for rooted directed path graphs. The status of the \#DS problem for undirected path graphs, directed path graphs, and rooted directed path graphs has been undetermined until now. This paper reveals that the \#DS problem remains \#P-complete even when restricted to directed path graphs and undirected path graphs but that a stricter restriction to rooted directed path graphs admits a polynomial-time solution. Figure 1 summarizes the situation. Accordingly, the borderline between polynomial and \#P-complete problems is fully determined for some subclasses of chordal graphs in Fig. 1.

Some notation and terminology are introduced for later use. The sets of vertices and edges of a graph $G$ are denoted as $V(G)$ and $E(G)$, respectively. The neighborhood of a vertex $v$ in a graph $G$, denoted by $N_G(v)$, is the set of vertices that are adjacent to $v$ in $G$. A vertex $v$ is said to \textit{dominate} vertex $u$ if either $v = u$ or $v$ is adjacent to $u$; that is, if $v {\in} N_G(u) \cup {u}$. For subsets $S, U \subseteq V(G)$, $S$ is said to dominate $U$ in $G$ if for every vertex $u {\in} U$, there exists a vertex $v {\in} S$ such that $v$ dominates $u$. Thus, a subset $S \subseteq V(G)$ is a dominating set in $G$ if and only if $S$ dominates $V(G)$. Let $\#DS(G)$ denote the number of dominating sets in $G$.
\begin{figure}[htbp]
    \makebox[\textwidth]{\includegraphics[scale=1.0, clip, trim=4cm 19.7cm 4cm 2.7cm]{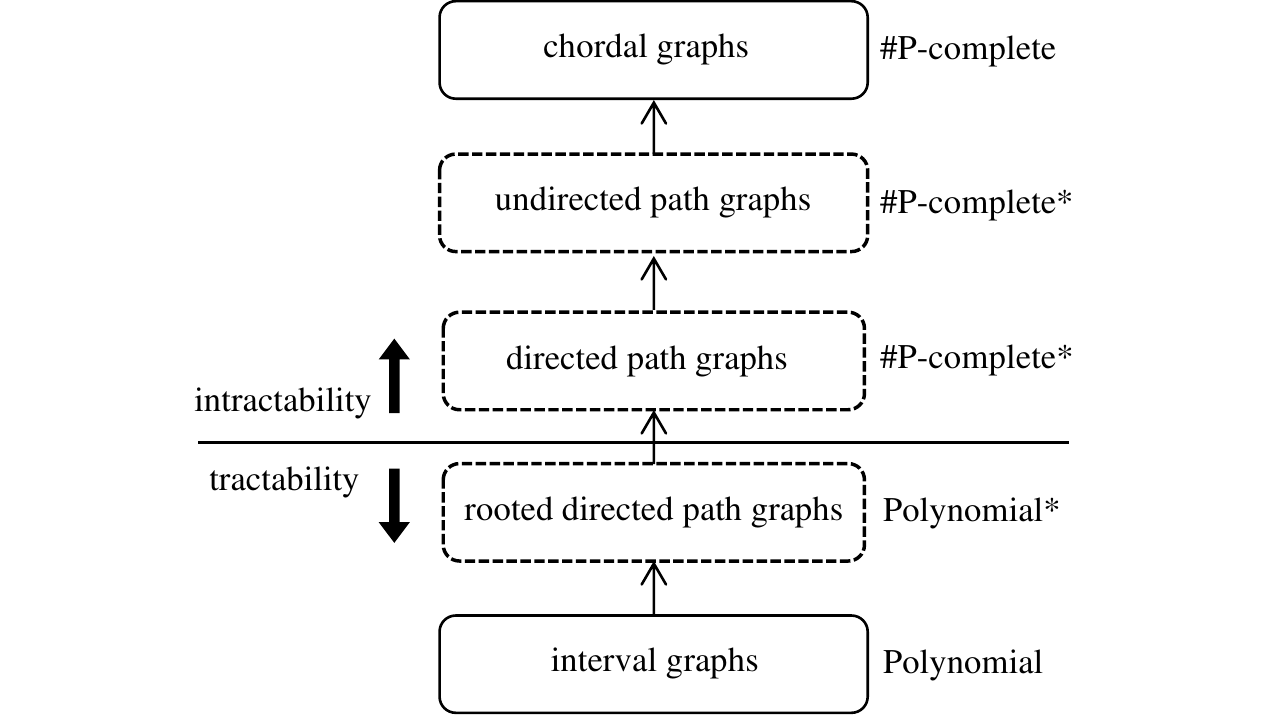}}
    \caption{ Status of \#DS problem for some subclasses of chordal graphs. A $\rightarrow$ B means that A is a subclass of B. Symbol * indicates a main contribution of this paper.}
    \label{fig1}
\end{figure}
\section{\#P-Completeness of the \#DS problem for Directed Path Graphs}
The \#DS problem for split graphs and chordal graphs has been proved to be \#P-complete by reduction from the problem of counting independent sets of bipartite graphs \cite{kijima2011dominating}. A slight variation of that reduction suffices to prove that the \#DS problem remains \#P-complete even when restricted to directed path graphs. 
\begin{thm}
The \#DS problem for directed path graphs is \#P-complete.
\end{thm}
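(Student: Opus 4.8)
The plan is to establish the two halves of \#P-completeness separately. Membership in \#P is immediate: given a vertex subset $S$, one can decide in polynomial time whether $S$ dominates $V(G)$, so the counting problem lies in \#P. For hardness I would give a polynomial-time reduction from the problem of counting the independent sets of a bipartite graph, which is \#P-complete and is precisely the source problem used in the chordal-graph reduction of Kijima et al.\ \cite{kijima2011dominating}. The objective is to adapt that split-graph construction so that the graph it produces is not merely chordal but is in fact a directed path graph, while preserving a polynomial-time-invertible relationship between its number of dominating sets and the number of independent sets of the input.

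Given a bipartite graph $B=(X,Y,E_B)$, I would build a graph $G$ whose vertices consist of a copy of $X\cup Y$ together with a small number of forcing gadgets (for instance a private pendant attached to each original vertex). The incidences of $B$ are encoded as edges of $G$ between the $X$-side and the $Y$-side, and one side is completed into a clique so that the host graph is split-like and hence chordal. The gadgets are arranged so that, when one sums over all dominating sets, the membership decisions on the gadget vertices factor out as fixed powers of two while the surviving choices on $X\cup Y$ range exactly over the vertex covers of $B$, i.e.\ the complements of its independent sets. This yields an identity of the schematic form $\#DS(G)=2^{c}\,\#IS(B)+(\text{lower-order terms})$, with $c$ and the correction computable directly from $B$, so that $\#IS(B)$ is recoverable in polynomial time. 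I would verify the identity by the standard case analysis on whether the clique side contributes to $S$, deferring the routine arithmetic.

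The substantive and, I expect, hardest step is to certify that $G$ is a directed path graph by exhibiting an explicit directed clique tree $T$ together with a directed subpath $T_v$ for each vertex $v$ such that $T_u\cap T_v\neq\emptyset$ exactly when $uv\in E(G)$. The natural attempt places an apex node through which all clique-vertices' paths run, with the independent vertices carried along branches hanging off it; the difficulty is that a single directed path visits only a linear chain of tree nodes, so a naive placement realizes only nested, interval-like adjacencies and cannot reproduce the arbitrary bipartite incidences of $B$. This is exactly the point at which the \emph{slight variation} is required: the incidences must be laid out so that a branch node can serve as a common meeting point for directed paths arriving from and departing to differently-oriented branches, a configuration that need not respect any single global root orientation. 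I would fix an orientation of $T$ realizing this, and then check the two conditions (i) each $T_v$ is a genuine directed path and (ii) the intersection pattern matches $E(G)$.

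Combining the polynomial-time construction of $G$, the poly-time-invertible counting identity, and the directed-path certificate gives a polynomial-time reduction from counting independent sets of bipartite graphs to \#DS on directed path graphs, establishing \#P-hardness; together with membership in \#P this proves the theorem. I would emphasize that the orientation freedom exploited in the certificate is unavailable once the clique tree is required to be rooted, which is consistent with, and foreshadows, the polynomial-time algorithm obtained later for rooted directed path graphs.
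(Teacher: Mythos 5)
There is a genuine gap --- in fact two, and they sit exactly at the two places you yourself flag as ``deferred'' and ``hardest.'' First, the counting identity. Your schematic $\#DS(G)=2^{c}\,\#IS(B)+(\text{lower-order terms})$ with the correction ``computable directly from $B$'' is not routine arithmetic and is false as stated: in any split-like construction of this kind the case analysis yields $\#DS(G)=\sum_{k=0}^{N} z_k\,2^k$, where $z_k$ counts the subsets of the clique side that dominate exactly $k$ vertices of the independent side. These coefficients are themselves hard counting quantities (the extreme one is precisely the target of the reduction), so from a \emph{single} graph you cannot peel off the lower-order terms in polynomial time. The paper's proof resolves this by building not one graph but $N+1$ graphs $G^r$, replicating each vertex-clique $r$ times so that $\#DS(G^r)=\sum_k z_k\,(2^r)^k$, and then inverting the resulting Vandermonde system by Cramer's rule to extract $z_N$. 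Moreover, with the construction that actually certifies directed-path-ness (below), the quantity appearing as the extreme coefficient is the number of \emph{edge covers} of $B$, so the paper reduces from \#EC (Provan and Ball \cite{provan1983complexity}), not from \#IS as in your plan; your source problem does not mesh with the geometry that makes the clique tree work.

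Second, the directed-path certificate. You correctly diagnose the obstruction --- with an apex clique node, a clique vertex lying in many maximal cliques gets a star of tree nodes, not a path --- but you leave the fix as ``I would fix an orientation \dots and then check,'' which is precisely the missing idea. Your concrete proposal (complete one side of $B$ into a clique, keep the incidences as edges, add pendants) does not repair it: a completed-side vertex of degree at least $3$ in $B$ still lies in at least three incomparable maximal cliques plus the central one, so its set of cliques cannot form a path in any clique tree, and no orientation can help. The paper's device is to swap the roles of vertices and edges: the central clique is $Q=\{e_{ij}:(x_i,y_j)\in E(B)\}$, one vertex per \emph{edge} of $B$, while the vertices of $B$ become independent vertices housed in cliques $K_i=\{x_i\}\cup\{e_{ij}\}$ and $H_j=\{y_j\}\cup\{e_{ij}\}$. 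Since each $e_{ij}$ has exactly two endpoints, it lies in exactly the cliques $K_i, Q, H_j$ (extended to the chains $K^r_i\rightarrow\cdots\rightarrow K^1_i\rightarrow Q\rightarrow H^1_j\rightarrow\cdots\rightarrow H^r_j$ in $G^r$, which simultaneously implements the interpolation), and orienting all $X$-branches into $Q$ and all $Y$-branches out of $Q$ makes every such set a genuine directed path; bipartiteness supplies exactly the orientation freedom you hoped for, and, as you correctly foreshadow, the resulting tree is not rooted because $Q$ has in-degree $n$. So the architecture of your proposal (bipartite source problem, split-like incidence graph, explicit directed clique tree) matches the paper's in outline, but both load-bearing components --- the $r$-fold replication with Vandermonde interpolation, and the edges-in-the-clique encoding that makes each relevant subtree a path --- are absent, and the substitutes you sketch would not go through.
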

\begin{proof}
The reduction proceeds from the problem of counting edge covers in a bipartite graph, which was proved to be \#P-complete by Provan and Ball \cite{provan1983complexity}.  An edge cover is a subset of edges that covers all the vertices. Let $B$ be an arbitrary bipartite graph with vertex bipartition $X={x_1, x_2, …, x_n}$ and $Y={y_1, y_2, …, y_m}$, and let $N=n+m$ denote the number of vertices of $B$. In the following steps, $N+1$ corresponding clique trees $T^r$, for $1{\leq} r {\leq} N+1$, are constructed from $B$, from which $N+1$ directed path graphs $G^r$ are obtained. The idea behind this construction is similar to the reduction that was used in \cite{lin2014counting} to show the \#P-completeness of the problem of counting maximal independent sets in directed path graphs. First, corresponding to each vertex $x_i {\in} X$, construct $r$ copies of clique node $K^s_i= \{x^s_i\} \cup \{e_{ij} : (x_i,y_j) {\in} E(B)\}$, for $1 {\leq} s {\leq} r$. Next, corresponding to each vertex $y_j {\in} Y$, construct $r$ copies of clique node $H^s_j= \{y^s_j\} \cup \{e_{ij} : (x_i, y_j) {\in} E(B)\}$, for $1 {\leq} s {\leq} r$. Finally, construct one large clique node $Q=\{e_{ij} : (x_i, y_j) {\in} E(B)\}$. Let $T^r$ be the directed clique tree such that $V(T^r)=\{Q\} \cup \{K^s_i : 1 {\leq} s {\leq} r \text{ and } 1 {\leq} i {\leq} n\} \cup \{ H^s_j : 1 {\leq} s {\leq} r \text{ and } 1 {\leq} j {\leq} m\}$ is the set of clique nodes of $T^r$; $E(T^r)=\{K^r_i {\rightarrow} K^{r-1}_i : 1 {\leq} i {\leq} n \} \cup \{K^{r-1}_i {\rightarrow} K^{r-2}_i : 1 {\leq} i {\leq} n\} \cup ... \cup \{K^2_i {\rightarrow} K^1_i : 1 {\leq} i {\leq} n\} \cup \{K^1_i {\rightarrow} Q : 1 {\leq} i {\leq} n\} \cup \{Q {\rightarrow} H^1_j : 1 {\leq} j {\leq} m\} \cup \{H^1_j {\rightarrow} H^2_j : 1 {\leq} j {\leq} m\} \cup ... \cup \{H^{r-1}_j {\rightarrow} H^r_j : 1 {\leq} j {\leq} m\}$ is the set of directed edges of $T^r$. Figure 2 presents an example of the above construction.

Let $G^r$ be the graph whose maximal cliques are the set $V(T^r)$. The resulting graph $G^r$, corresponding to the clique tree $T^r$, will now be shown to be a directed path graph. Let $X^r=\{ x^s_i  : 1 {\leq} s {\leq} r \text{ and } 1 {\leq} i {\leq} n \}$ and $Y^r=\{ y^s_j : 1 {\leq} s {\leq} r \text{ and } 1 {\leq} j {\leq} m \}$. Thus, $V(G^r)= X^r \cup Y^r \cup Q$. For $v {\in} V(G^r)$, let $P_v$ be the set of all clique nodes of $T^r$ that contain vertex $v$. If $v=x^s_i$ (or $v=y^s_j$) is a vertex of $X^r$ (or $Y^r$, respectively) then $P_v$ comprises the single clique node $K^s_i$ (or $H^s_j$, respectively) and $P_v$ is a directed path of length zero in $T^r$. If $v=e_{ij}$ is a vertex of $Q$, then $P_v$ comprises the directed path $K^r_i {\rightarrow} K^{r-1}_i {\rightarrow} ...{\rightarrow} K^1_i {\rightarrow} Q {\rightarrow} H^1_j {\rightarrow} H^2_j {\rightarrow} ... {\rightarrow} H^r_j$ of length $2r$ in $T^r$. Therefore, for each vertex $v {\in} V(G^r)$, $P_v$ is a directed path in $T^r$. Therefore, $G^r$ is a directed path graph with the corresponding clique tree $T^r$.

Now, the relationship between the number of DSs in $G^r$, for $1 {\leq} r {\leq} N+1$, and the number of edge covers in $B$ will be established. Clearly, every $G^r$, for $1 {\leq} r {\leq} N+1$, is also a split graph with a clique $Q$ and an independent set $X^r \cup Y^r$. Using the same strategy and techniques as were used in \cite{kijima2011dominating}, the number of DSs in $G^1$ is $\#DS(G^1) = \sum_{S \subseteq Q}2^{\left|N_{G^1}(S) \cap (X^1  \cup Y^1)\right|}$. Let $ z_k = \left| \{S \subseteq Q : \left|N_{G^1}(S) \cap (X^1  \cup Y^1)\right|=k\} \right| $. Thus,
\begin{equation} 
 \#DS(G^1) = \sum_{k=0}^N z_k \cdot 2^k.
\end{equation}
Equation (1) can be generalized as 
\begin{equation} 
\#DS(G^r) = \sum_{k=0}^N z_k \cdot (2^k)^r = \sum_{k=0}^N z_k \cdot (2^r)^k,  \text{ for } 1 \leq r \leq N+1.
\end{equation}
Additionally, Equation (2) can be expressed in matrix form as 
\begin{equation}
\mathcal{M} \times 
\begin{pmatrix}
z_0 \\
z_1 \\
z_2 \\
\vdots \\
z_N
\end{pmatrix}
=
\begin{pmatrix}
\#DS(G^1) \\
\#DS(G^2) \\
\#DS(G^3) \\
\vdots \\
\#DS(G^{N+1})
\end{pmatrix},
\end{equation}
where
\begin{equation}
 \mathcal{M} = 
 \begin{pmatrix}
 1 & \alpha_1^{\;1} & \alpha_1^{\;2} & \cdots & \alpha_1^{\,N} \\
 1 & \alpha_2^{\;1} & \alpha_2^{\;2} & \cdots & \alpha_2^{\,N} \\
 1 & \alpha_3^{\;1} & \alpha_3^{\;2} & \cdots & \alpha_3^{\,N} \\
 \vdots & \vdots  & \vdots & \ddots & \vdots \\
 1 & \alpha_{N+1}^{\;1} & \alpha_{N+1}^{\;2} & \cdots & \alpha_{N+1}^{\,N}
 \end{pmatrix}
 \text{ and } \alpha_r=2^r \text{ for } 1 \leq r \leq N+1.
 \end{equation}
$\mathcal{M}$ is a well-known Vandermonde matrix and all $\alpha_r$, for $1{\leq}r{\leq}N+1$, are distinct. Therefore, the determinant of $\mathcal{M}$ is non-zero. By Cramer's rule, given the values for $\#DS(G^1), \#DS(G^2), ..., \#DS(G^{N+1})$, the values of $z_0$, $z_1$, ..., and $z_N$ can be obtained in time that is polynomial in $N$. Notably, an edge cover of a graph is a set of edges such that every vertex of the graph is incident to at least one edge of the set. Thus, the number of edge covers in the bipartite graph $B$ is given by
\begin{equation}
\#EC(B) = \left| \{S \subseteq E(B) : | \cup_{(x_i,y_j ) {\in} S}\{x_i, y_j\} | = \left| V(B) \right| \} \right|.
 \end{equation}
Since $E(B)=Q$ and $|V(B)|=N$, $\#EC(B)$ can be computed simply as $z_N$. The reduction is thus completed in polynomial time. 
\end{proof}

Since directed path graphs form a subclass of undirected path graphs, the following corollary immediately follows.
\begin{cor}
The \#DS problem for undirected path graphs is \#P-complete.
\end{cor}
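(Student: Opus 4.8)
The plan is to derive this corollary directly from Theorem 1 by exploiting the inclusion relation recorded in the introduction, namely that directed path graphs form a (proper) subclass of undirected path graphs. Establishing \#P-completeness requires two ingredients: membership of the \#DS problem in the class \#P, and \#P-hardness of the problem when the input is restricted to undirected path graphs. I would dispatch the membership part first and then transfer the hardness from Theorem 1 without modifying its construction.

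For membership in \#P, I would note that this holds for the \#DS problem over any class of graphs, and in particular for undirected path graphs. Given a graph $G$ and a candidate subset $S \subseteq V(G)$, one can check in polynomial time whether every vertex of $G$ not in $S$ has a neighbor in $S$, that is, whether $S$ is a dominating set. Since such a witness $S$ has size polynomial in $|V(G)|$ and its validity is verifiable in polynomial time, the number of dominating sets equals the number of accepting computations of a polynomial-time nondeterministic machine, so the \#DS problem lies in \#P.

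For \#P-hardness, I would reuse the reduction of Theorem 1 verbatim. That reduction maps an arbitrary bipartite graph $B$ to the family of graphs $G^r$, each of which was shown to be a directed path graph with corresponding clique tree $T^r$. Because every directed path graph is an undirected path graph, each $G^r$ is simultaneously an undirected path graph, so no new construction is needed: the identical polynomial-time reduction from counting edge covers in bipartite graphs already outputs legitimate undirected path graph instances. Consequently, the same Vandermonde argument that recovers $z_N = \#EC(B)$ from the values $\#DS(G^1), \dots, \#DS(G^{N+1})$ shows that \#DS remains \#P-hard on undirected path graphs, which combined with membership yields \#P-completeness.

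I do not expect any genuine obstacle here, since the corollary is an immediate consequence of the class inclusion. The only point requiring care is to state explicitly that the graphs produced by the reduction in Theorem 1 already satisfy the stronger structural requirement of being undirected path graphs, so that the hardness transfers directly; once that observation is made, the result follows with no further work.
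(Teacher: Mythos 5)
Your proposal matches the paper's own argument: the corollary follows immediately because directed path graphs are a subclass of undirected path graphs, so the reduction of Theorem 1 already produces undirected path graph instances. Your added remark on membership in \#P is a standard detail the paper leaves implicit, and nothing further is needed.
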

\begin{figure}[htbp]
    \makebox[\textwidth]{\includegraphics[scale=1.0, clip, trim=3cm 12.8cm 3cm 2.7cm]{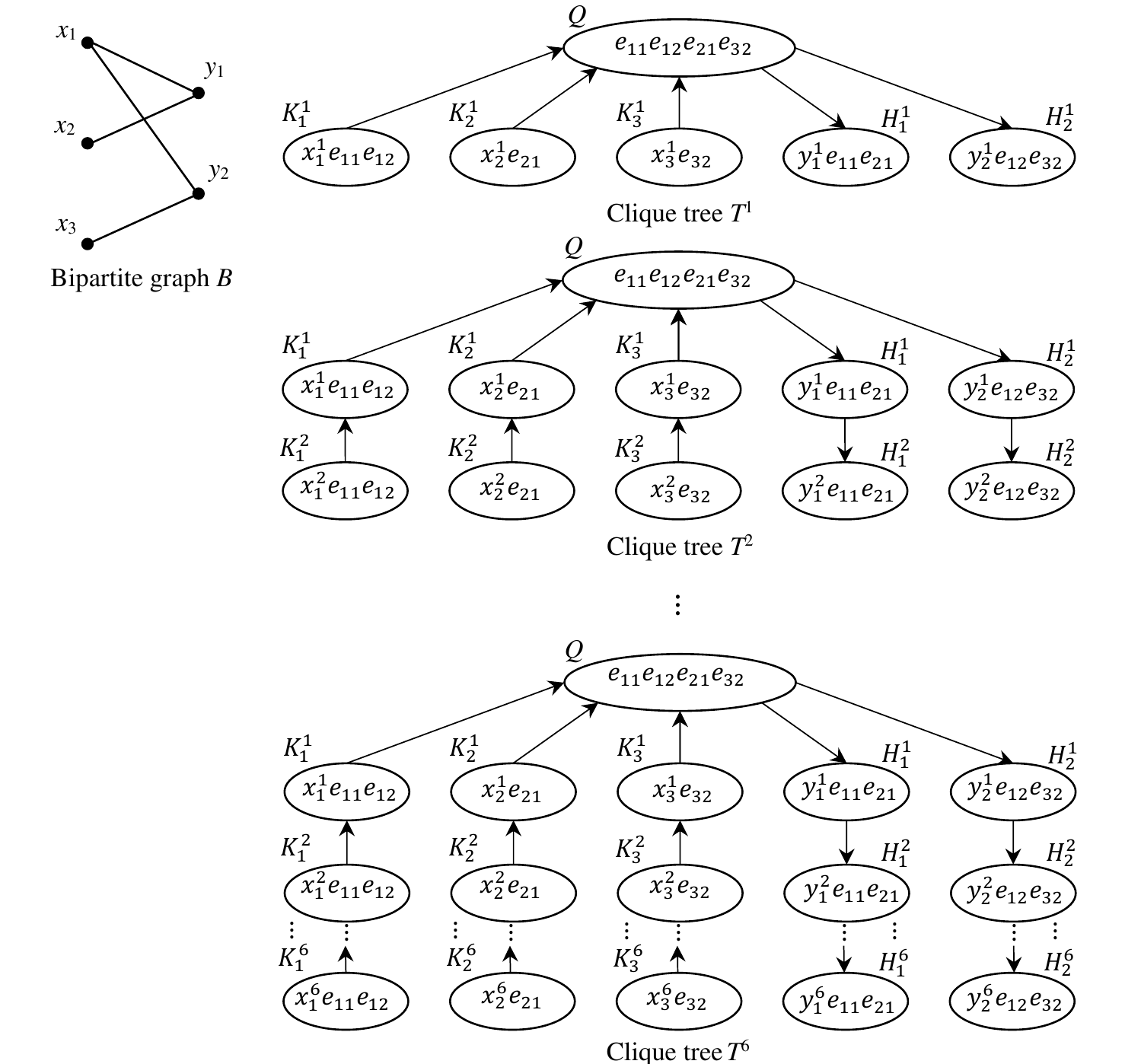}}
    \caption{Bipartite graph $B$ and constructed clique trees $T^r, 1{\leq}r{\leq}6$.}
    \label{fig2}
\end{figure}

\section{Polynomial-time Algorithm for Counting DSs in Rooted Directed Path Graphs}
This section describes a polynomial-time algorithm for solving the \#DS problem for rooted directed path graphs. First, assume that a clique tree $T$ has been constructed for the rooted directed path graph $G$. This construction takes only linear time using an easy modification to the recognition algorithm in \cite{dietz1979linear}. For $v {\in} V(G)$, let $P_v$ be the set of all nodes (cliques) of $T$ that contain vertex $v$. Therefore, $P_v$ forms a subpath in $T$ that is directed away from the root of $T$.
To simplify the description of the algorithm, the clique tree $T$ of a rooted directed path graph $G$ is transformed into an equivalent binary tree $BT$ by the following two steps. 
Step 1: each path $P_v$ in $T$ is extended by appending a leaf node to the ending node of $P_v$. Step 2: each node with an out-degree of $d > 2$ in $T$ are split into $d-1$ nodes with an out-degree of 2; refer to Fig. 3 for details \cite{jansen1997disjoint}. Figure 4 presents an example of the above steps. 
\begin{figure}[htbp]
    \makebox[\textwidth]{\includegraphics[scale=1.0, clip, trim=3cm 22.4cm 3cm 2.7cm]{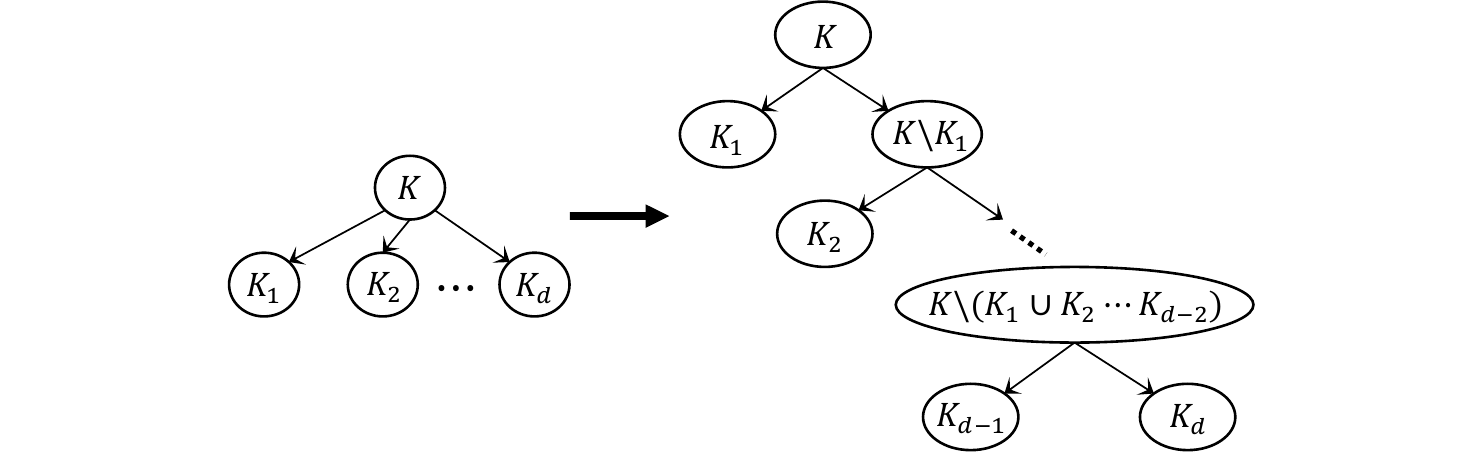}}
    \caption{Transformation of a node with out-degree $d$.}
    \label{fig3}
\end{figure}
\begin{figure}[htbp]
    \makebox[\textwidth]{\includegraphics[scale=1.0, clip, trim=2cm 21.0cm 2cm 2.8cm]{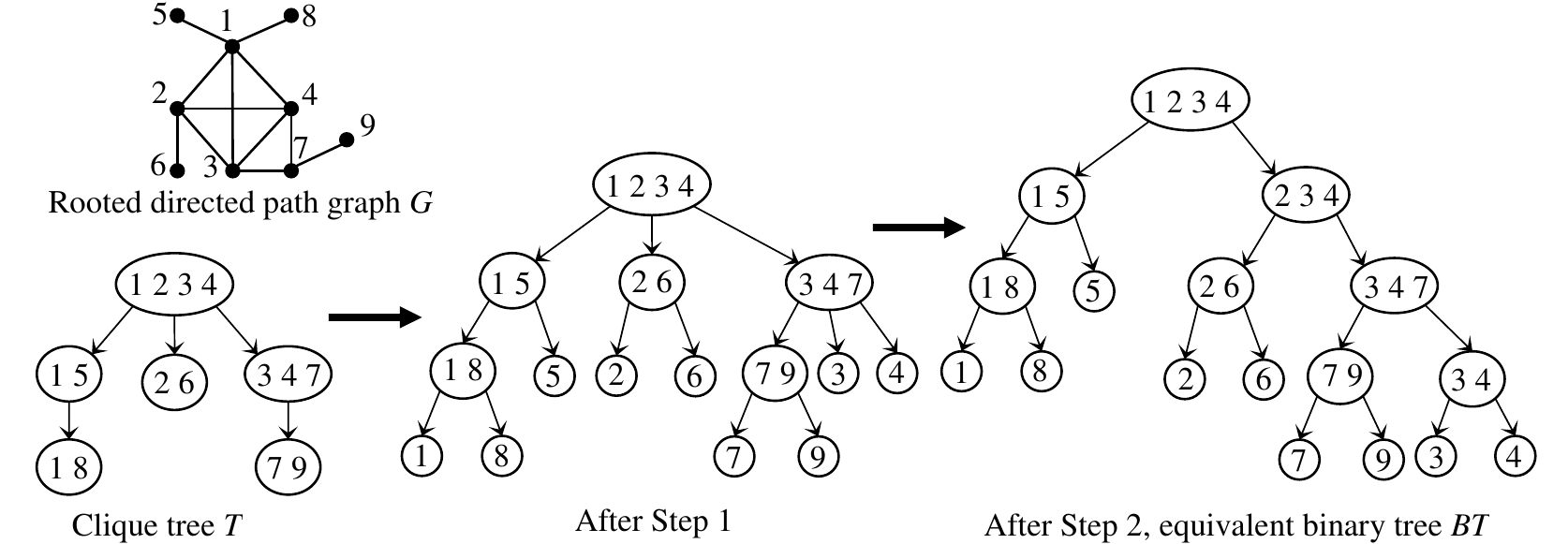}}
    \caption{Example of transforming a clique tree $T$ of a rooted directed path graph $G$ into an equivalent binary tree $BT$.}
    \label{fig4}
\end{figure}
For any node $k$ in $BT$, let $V(k)$ denote the vertices in node $k$ and let $G_k$ be the subgraph of $G$ that is induced by $V(k)$ and the vertices that are contained in all descendants of node $k$ in $BT$. The following property is easily derived.
\begin{pty}
Suppose that node $k$ is an internal node of $BT$ and it has two children $i$ and $j$. Then, 
\begin{equation*}
V(k) \subseteq V(i) \uplus V(j) \text{ and } V(G_k) = V(G_i)  \uplus V(G_j).
\end{equation*}
\end{pty}

Let $S(k)$ be the collection of subsets $S$ of $V(G_k)$ such that $S$ dominates $V(G_k) \setminus V(k)$ in $G_k$. The collection $S(k)$ is partitioned into three disjoint subsets, $A(k), B(k)$, and $C(k)$, determined by whether $S$ contains or dominates the vertices of $V(k)$. For any node $k$ in $BT$, 
\begin{align*}
A(k):=& \{S {\in} S(k) : S \text{ contains some vertex of  }V(k).\}, \\
B(k):=& \{S {\in} S(k) : S \text{ contains no vertex of } V(k) \text{ and some vertices of }V(k) \text{ are not dominated by }S. \}, \text{ and } \\
C(k):=& \{S {\in} S(k) : S \text{ contains no vertex of }V(k) \text{ and all vertices of }V(k) \text{ are dominated by }S.\}. \\
\end{align*}

Notably, all $S {\in} A(k)$ or $S {\in} C(k)$ dominate all vertices of $V(G_k)$ so the number of DSs in $G$ is $|A(r)|+|C(r)|$, where $r$ is the root node of $BT$.
Define a strict partial order $<$ on $V(G)$ such that for any two vertices $u, v {\in} V(G)$, $u < v$ if and only if either the starting node of path $P_u$ is a descendant of the starting node of path $P_v$ in $BT$, or $P_u$ and $P_v$ have the same starting node and $u$ appears before $v$ in an arbitrary predefined sequence of vertices of $V(G)$. According to the strict partial order $<$, $A(k)$ and $B(k)$ are further partitioned into subsets $A(k, v)$ and $B(k, v)$, respectively, for $v {\in} V(k)$, as follows. For any node $k$ in $BT$,
\begin{align*}
A(k, v) :=& \{S {\in} A(k) : v \text{ is the maximum vertex (in the order $<$) of }S \cap V(k). \}, \text{ and } \\
B(k, v) :=& \{S {\in} B(k) : v \text{ is the minimum vertex (in the order $<$) of }V(k) \text{ that is not dominated by }S. \}.
\end{align*}

The following notation will be used in the proof below. Consider two collections of sets $\Omega_1$  and $\Omega_2$, and assume that for each $X_1 {\in} \Omega_1$ and each $X_2 {\in} \Omega_2$, $X_1$ and $X_2$ are disjoint sets. Let $X_1 \uplus X_2$ denote the disjoint union of sets $X_1$ and $X_2$. The operation $\times$  is defined by $\Omega_1 \times \Omega_2=\{X_1 \uplus X_2 : X_1 {\in} \Omega_1 \text{ and } X_2 {\in} \Omega_2\}$. 

\begin{lem}
Suppose that node $k$ is a leaf node of $BT$ and $V(k)=\{v\}$. Then
\begin{equation}
A(k,v)=\large\{ \{v\} \large\}, B(k,v)=\{\emptyset\}, \text{ and } C(k)=\emptyset.
\end{equation}
\end{lem}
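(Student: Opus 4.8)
The plan is to verify each of the three equalities directly from the definitions, exploiting the fact that a leaf node of $BT$ carries essentially no structure. First I would observe that, because $k$ is a leaf, it has no descendants, so by the definition of $G_k$ we have $V(G_k)=V(k)=\{v\}$, and hence $G_k$ is the single-vertex graph on $v$. Consequently $V(G_k)\setminus V(k)=\emptyset$, and the defining condition for $S(k)$---that $S$ dominate $V(G_k)\setminus V(k)$---is vacuously satisfied by every subset of $V(G_k)$. Therefore $S(k)$ consists of all subsets of $\{v\}$, namely $S(k)=\{\emptyset,\{v\}\}$.

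The next step is to sort these two subsets into $A(k)$, $B(k)$, and $C(k)$ according to the stated partition. The set $\{v\}$ contains a vertex of $V(k)$, so it lies in $A(k)$; since $V(k)=\{v\}$, the maximum vertex of $\{v\}\cap V(k)$ in the order $<$ is $v$ itself, placing $\{v\}$ in $A(k,v)$ and giving $A(k,v)=\{\{v\}\}$. The set $\emptyset$ contains no vertex of $V(k)$, and moreover dominates no vertex, so $v$ is not dominated by $\emptyset$. Hence $\emptyset$ belongs to $B(k)$, and as $v$ is the only (and therefore the minimum) undominated vertex of $V(k)$, we obtain $B(k,v)=\{\emptyset\}$.

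Finally, to establish $C(k)=\emptyset$, I would note that membership in $C(k)$ requires a set containing no vertex of $V(k)$ yet dominating all of $V(k)$; the only candidate containing no vertex of $V(k)$ is $\emptyset$, which fails to dominate $v$, so no subset qualifies. Since the two elements of $S(k)$ have now been fully accounted for by $A(k,v)$ and $B(k,v)$, all three claimed equalities follow.

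I do not anticipate a genuine obstacle here, since the statement reduces to a finite case check over the two subsets of a one-element vertex set. The only point demanding care is the vacuous interpretation of the domination requirement defining $S(k)$ when $V(G_k)\setminus V(k)$ is empty, together with the convention that the empty set dominates nothing, which is precisely what forces $\emptyset$ into $B(k)$ rather than $C(k)$.
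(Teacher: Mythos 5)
Your proof is correct and takes the same route as the paper, whose entire proof is the one-line remark that the lemma ``follows immediately from the above definitions''; you have simply spelled out that definitional check in full, including the two points that genuinely need care (the vacuous domination condition when $V(G_k)\setminus V(k)=\emptyset$, and the fact that $\emptyset$ dominates nothing, which places it in $B(k,v)$ rather than $C(k)$).
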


\begin{proof}
The lemma follows immediately from the above definitions.
\end{proof}

\begin{lem}
Suppose that node $k$ is an internal node of $BT$ and it has two children $i$ and $j$. For each vertex $v {\in} V(k)$, without loss of generality, assume $v {\in} V(i)$. Then
\begin{equation}
A(k,v)=A(i,v) \times \left \{ \left( \biguplus_{u{\in} V(j)\And u<v}{A(j,u)} \right) \uplus \left( \biguplus_{u{\in} V(j)\cap V(k)}{B(j,u)} \right)\uplus C(j) \right\}.
\end{equation}
\end{lem}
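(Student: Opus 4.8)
The plan is to prove the stated identity as an equality between two collections of vertex subsets, by exhibiting the canonical decomposition $S = S_i \uplus S_j$ coming from Property~1 and checking that the defining membership conditions match term by term. Since $V(G_k) = V(G_i) \uplus V(G_j)$, every $S \subseteq V(G_k)$ splits uniquely into $S_i := S \cap V(G_i)$ and $S_j := S \cap V(G_j)$, and conversely any pair $(S_i,S_j)$ with $S_i \subseteq V(G_i)$, $S_j \subseteq V(G_j)$ reassembles to a unique $S$. Because $V(G_i)$ and $V(G_j)$ are disjoint, the operation $\times$ and every $\uplus$ on the right-hand side are well defined, so it suffices to show that $S \in A(k,v)$ holds if and only if $S_i \in A(i,v)$ and $S_j$ lies in exactly one of the families $A(j,u)$ with $u<v$, $B(j,u)$ with $u \in V(j)\cap V(k)$, or $C(j)$.

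The first and central step is a \emph{locality of domination} claim: any vertex $w \in V(G_i)\setminus V(k)$ can be dominated only by vertices of $V(G_i)$, and symmetrically for $j$. I would derive this from the directed-path structure: since $w \notin V(k)$ and $i$ is the child of $k$, the path $P_w$ cannot contain the clique node $k$, so a downward path meeting the subtree rooted at $i$ while avoiding $k$ must lie entirely within that subtree. If some $x \notin V(G_i)$ dominated $w$, then $P_x$ would meet both the subtree of $j$ (as $x \in V(G_j)$) and that of $i$, forcing $P_x$ to branch at their common ancestor $k$, which is impossible for a directed path. This claim lets me split the single requirement ``$S$ dominates $V(G_k)\setminus V(k)$'' into the two independent requirements ``$S_i$ dominates $V(G_i)\setminus(V(i)\cap V(k))$'' and ``$S_j$ dominates $V(G_j)\setminus(V(j)\cap V(k))$,'' using that $V(k) = (V(i)\cap V(k)) \uplus (V(j)\cap V(k))$.

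For the $i$-side I would exploit that $V(i)$ is a clique (its vertices all contain node $i$), so the hypothesis $v \in S_i \cap V(i)$ built into $A(i,v)$ already forces $v$ to dominate all of $V(i)$; hence $S_i \in A(i,v)$ is equivalent to ``$S_i$ dominates $V(G_i)\setminus(V(i)\cap V(k))$ and $v=\max_{<}(S_i \cap V(i))$.'' That $v$ is simultaneously the $<$-maximum of $S\cap V(k)$ and of $S_i\cap V(i)$ follows by comparing starting nodes: every vertex of $V(i)\setminus V(k)$ starts at node $i$, a descendant of $k$, and is therefore $<$ any vertex of $V(i)\cap V(k)$. For the $j$-side I would run a three-way analysis on whether $S_j$ meets $V(j)$ and whether it dominates $V(j)$. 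When $S_j \cap V(j)\neq\emptyset$ one gets $S_j \in A(j,u)$ with $u=\max_{<}(S_j\cap V(j))$, and the requirement $v=\max_{<}(S\cap V(k))$, together with the fact that every $V(j)$-vertex outside $V(k)$ starts at $j$ and hence lies below $v$, forces exactly $u<v$. When $S_j\cap V(j)=\emptyset$ but some vertex of $V(j)$ is undominated, the minimum undominated vertex $u$ must lie in $V(j)\cap V(k)$, since otherwise $u\in V(j)\setminus V(k)$ would be an undominated vertex of $V(G_j)\setminus(V(j)\cap V(k))$, contradicting the split requirement; this yields $S_j \in B(j,u)$ with $u\in V(j)\cap V(k)$. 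The last case gives $S_j \in C(j)$. The reverse implications are obtained by reversing each of these arguments.

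The main obstacle I expect is the locality-of-domination claim and the accompanying partial-order bookkeeping, since these are precisely what decouple the two sides and pin down the index ranges $u<v$ and $u\in V(j)\cap V(k)$; the remainder is routine. In particular, the three $j$-families are mutually disjoint and exhaust $S(j)$ under the split requirement, immediately from the definitions of $A(j,\cdot)$, $B(j,\cdot)$, $C(j)$ and of $\max_{<}$ and $\min_{<}$, so assembling the forward and reverse directions yields the claimed equality, with the uniqueness of the decomposition $S=S_i\uplus S_j$ guaranteeing that the right-hand side realizes each $S$ exactly once.
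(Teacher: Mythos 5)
Your proposal is correct and follows essentially the same route as the paper's proof: decompose $S = S_i \uplus S_j$ via Property~1, show $S_i \in A(i,v)$, and run the three-way case analysis on $S_j$ against $A(j,\cdot)$, $B(j,\cdot)$, $C(j)$ with the same starting-node bookkeeping that pins down the index ranges $u<v$ and $u \in V(j)\cap V(k)$. The one difference is presentational and to your credit: you state and justify explicitly the locality-of-domination claim (vertices of $V(G_i)\setminus V(k)$ can only be dominated within $V(G_i)$, by the directed-path argument at node $k$) and the clique property of $V(i)$, both of which the paper uses tacitly when it asserts $S_i \in A(i,v)$ and derives its contradictions in Case~2.
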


\begin{proof}
Consider a DS $S{\in} A(k,v)$. By Property 1, $S$ can be partitioned into two disjoint subsets $S_i$ and $S_j$, i.e. $S=S_i\uplus S_j$, where $S_i= S\cap V(G_i)$ and $S_j= S\cap V(G_j)$. Since $v{\in} V(i)\cap V(k)$ and $v$ is the maximum vertex of $S\cap V(k)$, $v$ is also the maximum vertex of $S_i\cap V(i)$, and hence $S_i{\in} A(i,v)$. Now consider three possible cases of $S_j$.

\noindent \textbf{Case 1:} $S_j{\in} A(j)$. Notably, not all $S_j{\in} A(j)$ coincide with $S_i{\in} A(i,v)$. Let $u{\in} V(j)$ be the maximum vertex of $S_j\cap V(j)$. If $u{\in} V(k)$, then $u < v$, because $v$ is the maximum vertex of $S\cap V(k)$. If $u{\notin} V(k)$, then node $j$ must be the starting node of path $P_u$, and hence $u < v$. Therefore, $S_j{\in} \biguplus\nolimits_{u{\in} V(j)\And u<v}{A(j,u)}$.

\noindent \textbf{Case 2:} $S_j{\in} B(j)$. Again, not all $S_j{\in} B(j)$ coincide with $S_i{\in} A(i,v)$. Let $u{\in} V(j)$ be the minimum vertex of $V(j)$ that is not dominated by $S_j$. Suppose $u{\notin} V(k)$. Then, $u$ is not dominated by $S$, contradicting the fact that $S{\in} S(k)$ dominates all vertices of $V(G_k)\setminus V(k)$, and hence $u{\in} V(k)$. Therefore, $S_j{\in} \biguplus \nolimits_{u{\in} V(j)\cap V(k)}{B(j,u)}$.

\noindent \textbf{Case 3:} $S_j{\in} C(j)$. Clearly, all $S_j{\in} C(j)$ coincide with $S_i{\in} A(i,v)$.

From Cases 1-3, $S{\in} A(i,v)\times \left\{ \left( \biguplus\nolimits_{u{\in} V(j)\And u<v}{A(j,u)} \right)\uplus \left( \biguplus\nolimits_{u{\in} V(j)\cap V(k)}{B(j,u)} \right)\uplus C(j) \right\}$, and hence $A(k,v)\subseteq A(i,v)\times \left\{ \left( \biguplus\nolimits_{u{\in} V(j)\And u<v}{A(j,u)} \right)\uplus \left( \bigcup\nolimits_{u{\in} V(j)\cap V(k)}{B(j,u)} \right)\uplus C(j) \right\}$.

Conversely, let $S_i' {\in} A(i,v)$ with $v{\in} V(i)\cap V(k)$. Consider the following three cases of $S_j^\prime$.

\noindent \textbf{Case 1:} $S_j'{\in} A(j,u)$ with $u{\in} V(j)$ and $u<v$. Clearly, $S_i'\uplus S_j'{\in} A(k)$ because $S_i'\uplus S_j'$ dominates $V(G_i)\uplus V(G_j)=V(G_k)$ and $S_i'\uplus S_j'$ contains some vertex $v$ of $V(k)$. By the definitions of $A(i,v)$ and $A(j,u)$, $v$ and $u$ are the maximum vertices of $S_i'\cap V(i)$ and $S_j'\cap V(j)$, respectively. Since $u<v$, $v$ is the maximum vertex of $(S_i'\cap V(i))\uplus (S_j'\cap V(j))$. Notably, $(S_i'\cap V(i))\uplus (S_j'\cap V(j))=(S_i'\uplus S_j')\cap (V(i)\uplus V(j))$ and $V(k)\subseteq V(i) \uplus  V(j)$ by Property 1. Hence, $v$ is the maximum vertex of $(S_i'\uplus S_j')\cap V(k)$. Thus, $S_i'\uplus S_j'{\in} A(k,v)$ and $A(i,v)\times \biguplus \nolimits_{u{\in} V(j)\And u<v}{A(j,u)} \subseteq A(k,v)$. 

\noindent \textbf{Case 2:} $S_j'{\in} B(j,u)$ with $u{\in} V(j)\cap V(k)$. By the definition of $B(j,u)$, $u$ is the minimum vertex of $V(j)$ that is not dominated by $S_j'$. Thus, for each vertex $w{\in} V(j)$ that is not dominated by $S_j', w\geq u$, and since $u{\in} V(k), w{\in} V(k)$. Thus, $w$ is dominated by the vertex $v{\in} S_i' $ so $S_i'\uplus S_j'$ dominates all vertices of $V(j)$. Therefore, $S_i'\uplus S_j'$ dominates $V(G_i)\uplus V(G_j)=V(G_k)$. Furthermore, since $v$ is the maximum vertex of $(S_i'\uplus S_j')\cap V(k), S_i'\uplus S_j'{\in} A(k,v)$. Thus, $A(i,v)\times \biguplus \nolimits_{u{\in} V(j)\cap V(k)}{B(j,u)} \subseteq A(k,v)$. 

\noindent \textbf{Case 3:} $S_j'{\in} C(j)$. By the definition of $C(j), S_j'$ dominates $V(G_j)$ so $S_i'\uplus S_j'$ dominates $V(G_i)\uplus V(G_j)=V(G_k)$. In addition, $S_j'$ contains no vertex of $V(j)$ so contains no vertex $V(k)$. Thus, $v$ is the maximum vertex of $(S_i'\uplus S_j')\cap V(k)$. Therefore, $S_i'\uplus S_j' {\in} A(k,v)$ and $A(i,v)\times C(j) \subseteq A(k,v)$.

From the above Cases 1-3, $A(i,v)\times \left\{ \left( \biguplus\nolimits_{u{\in} V(j)\And u<v}{A(j,u)} \right)\uplus \left( \biguplus\nolimits_{u{\in} V(j)\cap V(k)}{B(j,u)} \right)\uplus C(j) \right\} \subseteq A(k,v)$ and the lemma follows. 
\end{proof}

\begin{lem}
Suppose that node $k$ is an internal node of $BT$ and it has two children $i$ and $j$. For each vertex $v{\in} V(k)$, without loss of generality, assume $v{\in} V(i)$. Then
\begin{equation}
B(k,v)=B(i,v)\times \left\{ \left( \biguplus_{u{\in} V(j)\setminus V(k)}{A(j,u)} \right)\uplus \left( \biguplus_{u>v\And u{\in} V(j)\cap V(k)}{B(j,u)} \right)\uplus C(j) \right\}.
\end{equation}
\end{lem}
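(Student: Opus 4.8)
The plan is to prove the stated identity of collections exactly as in the proof of Lemma 3, by establishing the two inclusions $B(k,v)\subseteq R$ and $R\subseteq B(k,v)$, where $R$ denotes the right-hand side. Every candidate set $S$ is split through Property 1 as $S=S_i\uplus S_j$ with $S_i=S\cap V(G_i)$ and $S_j=S\cap V(G_j)$; this is legitimate and the operation $\times$ is well defined because $V(G_i)\cap V(G_j)=\emptyset$. Two structural facts drive everything. First, since the subtrees of $BT$ rooted at $i$ and at $j$ are disjoint, a vertex of $V(G_i)$ can be dominated only from within $V(G_i)$ and a vertex of $V(G_j)$ only from within $V(G_j)$, the sole exception being that all vertices of the clique $V(k)$ are pairwise adjacent. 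Second, the order $<$ mirrors the tree: any $w\in V(i)\setminus V(k)$ or $w\in V(j)\setminus V(k)$ has $P_w$ starting at the strict descendant $i$ or $j$ of the start of $P_v$ (which sits on the root-to-$k$ path), so $w<v$; and $V(i)\cap V(j)=\emptyset$, so in particular a vertex of $V(j)$ never equals $v\in V(i)$.

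For the forward inclusion I take $S\in B(k,v)$. Using $S\in S(k)$ and the disjointness fact, I first get $S_i\in S(i)$ and $S_j\in S(j)$. Because $S$ meets no vertex of $V(k)$ and $V(k)$ is a clique containing $v$, any vertex of $S_i\cap V(i)$ would dominate $v$; as $v$ is undominated this forces $S_i\cap V(i)=\emptyset$, and since $v\in V(i)$ is undominated, $S_i\in B(i)$. The key step is to verify $v$ is the \emph{minimum} undominated vertex of $V(i)$: a strictly smaller undominated $v'$ would also be undominated by $S$ (it is unreachable from $V(G_j)$), and it would lie either in $V(k)$, contradicting minimality of $v$ in $V(k)$, or in $V(G_k)\setminus V(k)$, contradicting $S\in S(k)$; hence $S_i\in B(i,v)$. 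I then classify $S_j$: if $S_j\in A(j)$ its maximum selected vertex $u$ cannot lie in $V(k)$, so $u\in V(j)\setminus V(k)$; if $S_j\in B(j)$ its minimum undominated vertex $u$ must lie in $V(k)$ (else $S\notin S(k)$), is undominated by $S$ because its clique-neighbor $v$ is absent, and hence satisfies $u>v$ by minimality of $v$ together with $u\ne v$; and $S_j\in C(j)$ needs no constraint. This places $S$ in $R$.

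For the reverse inclusion I take $S_i'\in B(i,v)$ and $S_j'$ from one of the three collections and check $S:=S_i'\uplus S_j'\in B(k,v)$. That $S$ meets no vertex of $V(k)$ follows from $S_i'\cap V(i)=\emptyset$, $V(G_i)\cap V(k)\subseteq V(i)$, and the fact that in each branch $S_j'$ avoids $V(j)\cap V(k)$. To see $S\in S(k)$ I dominate each $w\in V(G_k)\setminus V(k)$: on the $V(G_i)$ side every $w\in V(i)\setminus V(k)$ has $w<v$ and is dominated by $S_i'\in B(i,v)$, while $V(G_i)\setminus V(i)$ is covered by $S_i'\in S(i)$; on the $V(G_j)$ side the clique property does the work—in the $A(j,u)$ branch the selected $u$ dominates all of the clique $V(j)$, in the $B(j,u)$ branch every vertex of $V(j)$ below $u$ is dominated and all of $V(j)\setminus V(k)$ lies below $u$, and $C(j)$ dominates $V(G_j)$ outright. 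Finally $v$ is the minimum undominated vertex of $V(k)$: it is undominated (reached by neither $S_i'$ nor $S_j'$), and every $w\in V(k)$ with $w<v$ is dominated—on the $V(i)$ side by $B(i,v)$, and on the $V(j)$ side by $u$ in the $A$-branch, by $w<v<u$ domination in the $B$-branch, or by $C(j)$.

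The main obstacle is the minimality bookkeeping for $v$, which must be argued consistently in both directions, and the correct translation of the three membership types of $S_j$ into the index ranges $u\in V(j)\setminus V(k)$ and $u>v,\ u\in V(j)\cap V(k)$. The one fact that legitimizes the $A(j,u)$ branch is that a selected vertex of the clique $V(j)$ dominates all of $V(j)$; and what distinguishes this lemma from Lemma 3 is precisely that here $v$ is the smallest \emph{undominated} vertex and no vertex of $V(k)$ is selected, which is exactly why the $A$-branch must avoid $V(k)$ and the $B$-branch must use indices strictly above $v$.
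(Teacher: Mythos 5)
Your proof is correct and takes essentially the same route as the paper's: both inclusions are established by splitting $S=S_i\uplus S_j$ via Property 1, showing $S_i\in B(i,v)$ through the same minimality argument (an undominated $w<v$ in $V(i)$ would lie in $V(G_k)\setminus V(k)$, contradicting $S\in S(k)$), and running the identical three-case analysis on $S_j$ in both directions with the same index bookkeeping $u\in V(j)\setminus V(k)$ and $u>v,\ u\in V(j)\cap V(k)$. You are in places more explicit than the paper (e.g., verifying $S_i\cap V(i)=\emptyset$ and checking $S(k)$-membership in the converse, which the paper leaves implicit); the one minor slip is attributing the domination of $v$ to the clique $V(k)$ where the relevant clique is $V(i)$ containing $v$, which is harmless since the inference goes through unchanged.
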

\begin{proof}
Consider a DS $S{\in} B(k,v)$. Let $S_i= S\cap V(G_i)$ and $S_j= S\cap V(G_j)$. Let $w{\in} V(i)$ be not dominated by $S_i$, and hence not dominated by $S$. Suppose $w < v$. Since $v$ is the minimum vertex of $V(k)$ that is not dominated by $S$, $w{\notin} V(k)$. Thus, there exists some vertex $w{\in} V(G_k)\setminus V(k)$ that is not dominated by $S$, contradicting the fact that $S{\in} S(k)$, and hence $w\geq v$. Thus, $v$ is also the minimum vertex of $V(i)$ that is not dominated by $S_i$. Hence, $S_i{\in} B(i,v)$. Now consider three possible cases of $S_j$.

\noindent \textbf{Case 1:} $S_j{\in} A(j)$. Since $S{\in} B(k,v)$, $S_j$ contains no vertex of $V(k)$. Thus, $S_j{\in} \biguplus \nolimits_{u{\in} V(j)\setminus V(k)}{A(j,u)}$.

\noindent \textbf{Case 2:} $S_j{\in} B(j)$. Let $u{\in} V(j)$ where $u$ is not dominated by $S$, and therefore also not dominated by $S_j$. Suppose $u{\notin} V(k)$. Thus, there exists a vertex $u{\in} V(G_k)\setminus V(k)$ that is not dominated by $S$, contradicting the fact that $S{\in} S(k)$, and hence $u{\in} V(k)$. Additionally, since $v$ is the minimum vertex of $V(k)$ that is not dominated by $S$, $u > v$. Thus, $S_j{\in} \biguplus \nolimits_{u>v\And u{\in} V(j)\cap V(k)}{B(j,u)}$.

\noindent \textbf{Case 3:} $S_j{\in} C(j)$. Clearly, all $S_j{\in} C(j)$ coincide with $S_i{\in} B(i,v)$.

Therefore, from Cases 1-3, $S{\in} B(i,v)\times \left\{ \left( \biguplus \nolimits_{u{\in} V(j)\setminus V(k)}{A(j,u)} \right)\uplus \left( \biguplus \nolimits_{u<v\And u{\in} V(j)\cap V(k)}{B(j,u)} \right)\uplus C(j) \right\}$ and $B(k,v) \subseteq B(i,v)\times \left\{ \left( \biguplus \nolimits_{u{\in} V(j)\setminus V(k)}{A(j,u)} \right)\uplus \left( \biguplus\nolimits_{u>v\And u{\in} V(j)\cap V(k)}{B(j,u)} \right)\uplus C(j) \right\}$.

Conversely, let $S_i'{\in} B(i,v)$ with $v{\in} V(i)\cap V(k)$. Consider the following three cases of $S_j'$.

\noindent \textbf{Case 1:} $S_j'{\in} A(j,u)$ with $u{\in} V(j)\setminus V(k)$. Clearly, $S_i'$ contains no vertex of $V(i)$ so it contains no vertex of $V(k)$. Since $u$ is the maximum vertex of $S_j'\cap V(j)$ and $u{\notin} V(k)$, $S_j'$ contains no vertex of $V(k)$. Thus, $S_i'\uplus S_j'$ contains no vertex of $V(k)$, Thus, $S_i'\uplus S_j'{\in} B(k)$. Additionally, $v$ is the minimum vertex of $V(i$) that is not dominated by $S_i'$. Hence, $v$ is the minimum vertex of $V(i)\cap V(k)$ that is not dominated by $S_i'$. Since $S_j'$ contains $u{\in} V(j)$, all vertices of $V(j)\cap V(k)$ are dominated by $S_j'$. Thus, $v$ is the minimum vertex of $V(k)=(V(i)\cap V(k))\uplus (V(j)\cap V(k))$ that is not dominated by $S_i'\uplus S_j'$. Hence, $S_i'\uplus S_j'{\in} B(k,v)$ and $B(i,v) \times \biguplus \nolimits_{u{\in} V(j)\setminus V(k)}{A(j,u)}\subseteq B(k,v)$.

\noindent \textbf{Case 2:} $S_j'{\in} B(j,u)$ with $u > v$ and $u{\in} V(j)\cap V(k)$. Clearly, $S_i'\uplus S_j'$ contains no vertex of $V(k)$ and $v$ is the minimum vertex of $V(k)$ that is not dominated by $S_i'\uplus S_j'$. Thus, $S_i'\uplus S_j'{\in} B(k,v)$ and $B(i,v)\times \biguplus \nolimits_{u>v\And u{\in} V(j)\cap V(k)}{B(j,u)}\subseteq B(k,v)$.

\noindent \textbf{Case 3:} $S_j'{\in} C(j)$. By the definition of $C(j)$, $S_j$ dominates all vertices of $V(j)$ so it also dominates all vertices of $V(j)\cap V(k)$. In addition, $S_j$ contains no vertex of $V(j)$, and therefore contains no vertex $V(k)$. Thus, $S_i'\uplus S_j'$ contains no vertex of $V(k)$ and $v$ is the minimum vertex of $V(k)$ that is not dominated by $S_i'\uplus S_j'$. Therefore, $S_i'\uplus S_j'{\in} B(k,v)$ and $B(i,v)\times C(j) \subseteq B(k,v)$.

From the above Cases 1-3, $B(i,v)\times \left\{ \left( \biguplus \nolimits_{u{\in} V(j)\setminus V(k)}{A(j,u)} \right)\uplus \left( \biguplus\nolimits_{u>v\And u{\in} V(j)\cap V(k)}{B(j,u)} \right)\uplus C(j) \right\} \subseteq B(k,v)$ and the lemma follows. 
\end{proof}

\begin{lem}
Suppose that node $k$ is an internal node of $BT$ and it has two children $i$ and $j$. Then
\begin{equation}
C(k) = \left\{ \left( \biguplus_{u{\in} V(i)\setminus V(k)}{A(i,u)} \right)\uplus C(i) \right\}\times \left\{ \left( \biguplus_{u{\in} V(j)\setminus V(k)}{A(j,u)} \right)\uplus C(j) \right\}.
\end{equation}
\end{lem}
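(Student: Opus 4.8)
The plan is to follow the same two-inclusion template that established Lemmas 3 and 4, using Property 1 to split every set at node $k$ into its two branches. For any $S \in S(k)$ I write $S = S_i \uplus S_j$ with $S_i = S \cap V(G_i)$ and $S_j = S \cap V(G_j)$, and conversely I recombine branch sets $S_i' \subseteq V(G_i)$ and $S_j' \subseteq V(G_j)$ into $S_i' \uplus S_j' \subseteq V(G_k)$. The point worth flagging at the outset is \emph{why} the right-hand side is a plain product of two symmetric factors with no order constraint (no ``$u<v$'' as in Lemmas 3 and 4): membership in $C(k)$ singles out no distinguished vertex of $V(k)$, demanding only that $S$ select no vertex of $V(k)$ while dominating all of $V(k)$. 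With nothing selected inside $V(k)$, there is no ``maximum selected vertex of $V(k)$'' and no ``minimum undominated vertex of $V(k)$'' to couple the two branches, so the conditions decouple and each branch contributes independently either an $A$-set whose maximal clique vertex lies outside $V(k)$, or a $C$-set.

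For the forward inclusion I would take $S \in C(k)$ and show each $S_i$ lies in the factor $\left(\biguplus_{u \in V(i)\setminus V(k)} A(i,u)\right) \uplus C(i)$. Since $S$ selects no vertex of $V(k)$, neither does $S_i$. If $S_i \in A(i)$, its maximum clique-$i$ vertex $u$ is selected, hence $u \notin V(k)$, giving $u \in V(i)\setminus V(k)$ and $S_i \in A(i,u)$; if $S_i \in C(i)$ it is already in the factor. The nontrivial step is ruling out $S_i \in B(i)$: a set in $B(i)$ leaves some $w \in V(i)$ undominated by $S_i$, and because $S$ selects no clique-$k$ vertex no vertex of $S_j$ can dominate $w$ either, so $w$ is undominated by $S$ — contradicting $S \in S(k)$ when $w \in V(G_k)\setminus V(k)$ and contradicting $S \in C(k)$ when $w \in V(k)$. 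The same argument applies to $S_j$ by symmetry.

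For the converse I would take $S_i'$ and $S_j'$ from the two factors and verify the three defining properties of $C(k)$ for $S' = S_i' \uplus S_j'$. First, $S'$ selects no vertex of $V(k)$: a $C$-branch selects nothing in its clique, and an $A(i,u)$-branch with $u \in V(i)\setminus V(k)$ cannot select a vertex of $V(i)\cap V(k)$, since such a vertex starts its path at or above $k$ and would therefore exceed $u$ in the order $<$, contradicting the maximality of $u$. Second, every vertex of $V(k)$ is dominated: a vertex $v \in V(i)\cap V(k)$ is dominated by the $A$-branch's selected clique-$i$ vertex (they share clique $i$, hence are adjacent) or by a $C$-branch (which dominates all of $V(i)$), and symmetrically for $V(j)\cap V(k)$. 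Third, $S' \in S(k)$: any $w \in V(G_k)\setminus V(k)$ lies in one branch and is dominated there, either by $S(i)$- or $S(j)$-membership when $w$ is deep, or by the selected clique vertex respectively $C$-domination when $w \in V(i)\setminus V(k)$.

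The main obstacle is that both the forward exclusion of $B$-branches and the converse verification of the ``no vertex of $V(k)$ is selected'' condition rest on two structural facts about the rooted directed clique tree that must be drawn carefully from the definition of $<$ and the ``directed away from the root'' orientation: first, that once no clique-$k$ vertex is selected, cross-branch domination is impossible, because a vertex of $V(G_j)$ outside $V(k)$ has its whole path confined to the subtree rooted at $j$ and so shares no clique with any vertex of $V(i)$; and second, that a selected vertex of $V(i)\setminus V(k)$ is strictly $<$ every vertex of $V(i)\cap V(k)$, whose path starts at an ancestor of $i$. These are exactly the geometric facts already invoked implicitly in Lemmas 3 and 4, so once they are in hand the remainder reduces to the routine clique-adjacency domination bookkeeping used there.
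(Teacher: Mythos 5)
Your proposal is correct and follows essentially the same route as the paper's own proof: the forward inclusion splits $S$ into $S_i$ and $S_j$ via Property 1 and rules out the $B$-branch case by exhibiting an undominated vertex, and the converse verifies that each combination of $A(\cdot,u)$ with $u \notin V(k)$ and $C(\cdot)$ factors lands in $C(k)$. The only difference is cosmetic and in your favor: you verify the defining properties of $C(k)$ directly rather than enumerating the paper's four converse cases, and you explicitly justify the cross-branch non-domination fact (that with no vertex of $V(k)$ selected, a vertex of $V(G_j)$ cannot dominate a vertex of $V(i)$) which the paper's Case 2 treats as immediate.
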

\begin{proof}

Consider a DS $S{\in} C(k)$. Let $S_i= S\cap V(G_i)$ and $S_j= S\cap V(Gj)$. Consider three possible cases of $S_i$.

\noindent \textbf{Case 1:} $S_i{\in} A(i)$. By the definition of $C(k)$, $S$ contains no vertex of $V(k)$ so $S_i{\in} \biguplus \nolimits_{u{\in} V(i)\setminus V(k)}{A(i,u)}$.

\noindent \textbf{Case 2:} $S_i{\in} B(i)$. Let $w{\in} V(i)$ be a vertex that is not dominated by $S_i$, and hence $w$ is not dominated by $S$. Since $S$ dominates all vertices of $V(k), w{\notin} V(k)$, contradicting the fact that $S{\in} S(k)$ dominates all vertices of $V(G_k)\setminus V(k)$. Therefore, $S_i{\notin} B(i)$.

\noindent \textbf{Case 3:} $S_i{\in} C(i)$. Clearly, all $S_i{\in} C(i)$ coincide with $S{\in} C(k)$.

From the above Cases 1-3, $S_i{\in} \left( \biguplus \nolimits_{u{\in} V(i){\setminus} V(k)}{A(i,u)} \right)\uplus C(i)$. Similarly, $S_j{\in} \left( \biguplus \nolimits_{u{\in} V(j){\setminus} V(k)}{A(j,u)} \right)\uplus C(j)$. Thus, $C(k)\subseteq \left\{ \left( \biguplus\nolimits_{u{\in} V(i){\setminus} V(k)}{A(i,u)} \right)\uplus C(i) \right\}\times \left\{ \left( \biguplus\nolimits_{u{\in} V(j){\setminus} V(k)}{A(j,u)} \right)\uplus C(j) \right\}$.

Conversely, consider the following four cases. 

\noindent \textbf{Case 1:} $S_i'{\in}A(i,v)$ with $v{\in}V(i){\setminus} V(k)$ and $S_j'{\in}A(j,u)$ with $u{\in}V(j){\setminus} V(k)$. Since $v{\in} V(i){\setminus} V(k)$ and $v$ is the maximum vertex of $S_i'\cap V(i), S_i'$ contains no vertex of $V(k)$. Since $S_i'$ dominates all vertex of $V(i)$, $S_i'$ dominates all vertices of $V(i)\cap V(k)$. Similarly, $S_j'$ contains no vertex of $V(k)$ and dominates all vertices of $V(j)\cap V(k)$. Thus, $S_i'\uplus S_j'$ contains no vertex of $V(k)$ and dominates all vertices of $(V(i)\cap V(k))\uplus (V(j)\cap V(k))=V(k)$. Hence, $S_i'\uplus S_j'{\in} C(k)$ and $\left( \biguplus\nolimits_{v{\in} V(i){\setminus} V(k)}{A(i,v)} \right)\times \left( \biguplus\nolimits_{u{\in} V(j){\setminus} V(k)}{A(j,u)} \right) \subseteq C(k)$. 

\noindent \textbf{Case 2:} $S_i'{\in}A(i,v)$ with $v{\in}V(i){\setminus} V(k)$ and $S_j'{\in}C(j)$. By an argument similar to that in Case 1, $S_i'$ contains no vertex of $V(k)$ and all vertices of $V(i)\cap V(k)$ are dominated by $S_i'$. By the definition of $C(j)$, $S_j'$ contains no vertex of $V(j)$ and dominates all vertices of $V(j)$. Hence, $S_j'$ contains no vertex of $V(k)$ and dominates all vertices of $V(j)\cap V(k)$. Therefore, $S_i'\uplus S_j'$ contains no vertex of $V(k)$ and dominates all vertices of $V(k)$. Hence, $S_i'\uplus S_j'{\in} C(k)$ and $\left( \biguplus\nolimits_{v{\in} V(i){\setminus} V(k)}{A(i,v)} \right)\times C(j)\subseteq C(k)$.

\noindent \textbf{Case 3:} $S_i'{\in}C(i)$ and $S_j'{\in}A(j,u)$ with $u{\in}V(j){\setminus} V(k)$. By an argument similar to that in Case 2, $S_i'\uplus S_j'$ contains no vertex of $V(k)$ and dominates all vertices of $V(k)$. Hence, $S_i'\uplus S_j'{\in} C(k)$ and $C(i)\times \left( \biguplus \nolimits_{u{\in} V(j){\setminus} V(k)}{A(j,u)} \right)\subseteq C(k)$.

\noindent \textbf{Case 4:} $S_i'{\in}C(i)$ and $S_j'{\in}C(j)$. By an argument similar to that in Case 2, $S_i'$ contains no vertex of $V(k)$ and dominates all vertices of $V(i)\cap V(k)$. Similarly, $S_j'$ contains no vertex of $V(k)$ and dominates all vertices of $V(j)\cap V(k)$. Thus, $S_i'\uplus S_j'{\in} C(k)$ and $C(i)\times C(j)\subseteq C(k)$.

From the above Cases 1-4, $\left\{ \left( \biguplus \nolimits_{v{\in} V(i){\setminus} V(k)}{A(i,v)} \right)\uplus C(i) \right\}\times \left\{ \left( \biguplus \nolimits_{u{\in} V(j){\setminus} V(k)}{A(j,u)} \right)\uplus C(j) \right\}\subseteq C(k)$ and the lemma follows.
\end{proof}

Let $\Omega_1$ and $\Omega_2$ be two collections of sets. By the definitions of operations $\times$ and $\uplus$, $ |\Omega_1\times \Omega_2| = |\Omega_1|\cdot |\Omega_2 |$ and $|\Omega_1\uplus \Omega_2| = |\Omega_1| + |\Omega_2|$. Thus, based on Lemmas 1-4, the following algorithm can be used to count DSs in a rooted directed path graph.

\begin{algorithm}
\caption{Counting DSs in a rooted directed path graph}
\KwIn{A rooted directed path graph $G$.}
\KwOut{The number of DSs in $G$.}  
\BlankLine    
Construct a rooted directed clique tree $T$; \\
Transform $T$ into an equivalent binary tree $BT$ with root node $r$; \\
\ForEach{node $k{\in}BT$ encountered in the post-order traversal of $BT$}{	
	\uIf{node $k$ is a leaf node with $V(k)=\{v\}$}{
  	     $|A(k, v)|{\leftarrow}1; |B(k, v)|{\leftarrow}1; |C(k)|{\leftarrow}0$;
	}
	\uElse{  \tcp{node $k$ is an internal node of $BT$ with two children}
		\ForEach{vertex $v{\in}V(k)$}{
			node $i\leftarrow$ the child of node $k$ that contains $v$;\\
			node $j\leftarrow$ the child of node $k$ that does not contains $v$; \\
			$|A(k,v)|{\leftarrow}|A(i,v)|\cdot \left( \sum\nolimits_{u{\in}V(j)\And u<v}{|A(j,u)|}+\sum\nolimits_{u{\in}V(j)\cap V(k)}{|B(j,u)|}+|C(j)| \right)$; \\
			$|B(k,v)|{\leftarrow}|B(i,v)|\cdot \left( \sum\nolimits_{u{\in}V(j){\setminus}V(k)}{|A(j,u)}|+\sum\nolimits_{u>v\And u{\in}V(j)\cap V(k)}{|B(j,u)|}+|C(j)| \right)$; \\
			$|C(k)|{\leftarrow}\left( \sum\nolimits_{u{\in}V(i)\setminus V(k)}{|A(i,u)}|+|C(i)| \right)\cdot \left( \sum\nolimits_{u{\in}V(j)\setminus V(k)}{|A(j,u)}|+|C(j)| \right)$; \\
		}
	}
}
\Return{ $\sum\nolimits_{v{\in}V(r)}{|A(r,v)|}+|C(r)|$};
\end{algorithm}

\begin{thm}
For a rooted directed path graph with $n$ vertices, the \#DS problem is solvable in $O(n^3)$ time.
\end{thm}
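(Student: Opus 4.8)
The plan is to separate the two halves of the claim, since the algorithm has already been exhibited. Correctness comes essentially for free from Lemmas 1--4: Lemma 1 supplies the leaf base case, and Lemmas 2--4 express $A(k,v)$, $B(k,v)$, and $C(k)$ at an internal node entirely in terms of the corresponding collections at its two children. Applying the cardinality identities $|\Omega_1 \times \Omega_2| = |\Omega_1|\cdot|\Omega_2|$ and $|\Omega_1 \uplus \Omega_2| = |\Omega_1| + |\Omega_2|$ turns each set-valued recurrence into exactly the scalar update performed in the loop, so a single post-order traversal computes every $|A(k,v)|$, $|B(k,v)|$, and $|C(k)|$. Because every $S \in A(r) \cup C(r)$ dominates $V(G_r) = V(G)$ and these are precisely the dominating sets of $G$, the returned value $\sum_{v \in V(r)} |A(r,v)| + |C(r)|$ equals $\#DS(G)$.

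The first quantitative step is to bound the size of $BT$. Since $G$ is chordal, its clique tree $T$ has at most $n$ nodes; Step 1 appends at most one leaf per vertex, and Step 2 replaces a node of out-degree $d$ by $d-1$ nodes, which, as the out-degrees of a tree sum to its number of edges, enlarges the node count by at most a constant factor. Hence $BT$ has $O(n)$ nodes, and since each $V(k)$ is contained in a maximal clique (the appended leaves being singletons), $|V(k)| \le n$ for every $k$.

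The core of the argument is the cost at a single internal node $k$ with children $i$ and $j$. By Property 1 the union $V(k) \subseteq V(i) \uplus V(j)$ is disjoint, so each $v \in V(k)$ lies in exactly one child, identifiable in $O(1)$ time after linear preprocessing. For such a $v$, the update for $|A(k,v)|$ adds $O(|V(j)|)$ terms drawn from the $|A(j,u)|$, $|B(j,u)|$, and $|C(j)|$ values, and the update for $|B(k,v)|$ is analogous; the single update for $|C(k)|$ costs $O(|V(i)|+|V(j)|)$. Treating each arithmetic operation as unit cost, the total work at $k$ is $O(|V(k)|\cdot n) = O(n^2)$, and summing over the $O(n)$ nodes of $BT$ yields the claimed $O(n^3)$ bound.

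The part I expect to need the most care is the bookkeeping that keeps each per-node step within its $O(n)$ budget, rather than any deep difficulty. In particular, the conditions $u<v$ and $u>v$ in the sums of Lemmas 2 and 3 must be resolvable in $O(1)$; I would precompute, for every vertex, the starting node of its path in $BT$ together with its rank in the predefined sequence, so that each comparison under the order $<$ is decided by table lookup. I would also verify once that Steps 1--2 leave Property 1 and the induced subgraphs $G_k$ intact, which is what licenses the recurrences on $BT$. Finally, the $O(n^3)$ figure is stated in the arithmetic (unit-cost) model; since the counters can reach $2^n$ and thus have $\Theta(n)$ bits, a bit-complexity accounting would multiply the bound by a near-linear factor, a caveat I would flag explicitly.
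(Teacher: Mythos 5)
Your proof is correct and follows essentially the same route as the paper: correctness is delegated to Lemmas 1--4 together with the cardinality identities for $\times$ and $\uplus$, and the running time comes from the $O(n)$-node binary tree $BT$ with $O(n^2)$ arithmetic work per internal node. Your accounting is in fact slightly more explicit than the paper's (which simply asserts the $2n-1$ node count and the $O(n^3)$ total), and your closing remark about bit-complexity in the unit-cost model is a legitimate caveat the paper leaves implicit.
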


\begin{proof}
 The correctness of Algorithm 1 follows from Lemmas 1-4. The time complexity of Algorithm 1 is analyzed as follows. Given a rooted directed path graph $G$ with $n$ vertices, the clique tree $T$ can be constructed in linear time by making a simple modification to the recognition algorithm of Dietz et al. \cite{dietz1979linear}. Since $T$ has no more than $n$ nodes, $T$ can be transformed into an equivalent binary tree $BT$ in $O(n^2)$ time. All nodes, except leaf nodes, in $BT$ have an out-degree of 2 and the number of leaf nodes is $n$. Hence, the total number of nodes in $BT$ is $2n-1$. Thus, Algorithm 1 takes $O(n^3)$ time to compute all $|A(k,v)|, |B(k,v)|$, and $|C(k)|$. Consequently, the number of DSs in a rooted directed path graph with $n$ vertices can be computed in $O(n^3)$ time.
\end{proof}

\section{Conclusions}
This paper is the first to determine the complexity of the \#DS problem for two subclasses of chordal graphs - directed path graphs and rooted directed path graphs. The \#DS problem is proved to remain \#P-complete when restricted to directed path graphs but a further restriction to rooted directed path graphs admits a solution in polynomial time.

\bibliographystyle{abbrv}
\bibliography{ms}

\end{document}